\newcounter{mycounter}
\newenvironment{noindlist} {\begin{list}{\arabic{mycounter})~~}{\usecounter{mycounter} \labelsep=0em \labelwidth=0em \leftmargin=0em \itemindent=0em}} {\end{list}}
\newenvironment{noindqlist} {\begin{list}{$\ket{\arabic{mycounter}}$~~}{\usecounter{mycounter} \labelsep=0em \labelwidth=0em \leftmargin=0em \itemindent=0em}} {\end{list}}
\newenvironment{qlist} {\begin{list}{$\ket{\arabic{mycounter}}$~~}{\usecounter{mycounter} \labelsep=0em \labelwidth=0em}} {\end{list}}
\newtheorem{defn}{Definition}
 \newtheorem{lemma}{Lemma}
\newcounter{TableCounter}
\def\input@path{{graphics/}}
\providecommand{\csword}[1]{\ensuremath{\text{\ComplexityFont{#1}}}\xspace}
\providecommand{\jacobi}[2]{\left( \displaystyle\frac{#1}{#2} \right)}
\providecommand{\Z}{\mathbb{Z}}
\providecommand{\Zp}{\Z/p\Z}
\providecommand{\Qbar}{\ensuremath{\overline{\mathbb{Q}}}}
\providecommand{\ket}[1]{\left| #1 \right\rangle}
\providecommand{\qpic}{$\langle\textrm{q}|\textrm{pic}\rangle$\xspace}
\providecommand{\NOT}{\csword{NOT}}
\providecommand{\CNOT}{\csword{CNOT}}
\providecommand{\CNOTs}{\csword{CNOT}{}s\xspace}
\providecommand{\CZ}{\csword{CZ}}
\providecommand{\CCZ}{\csword{CCZ}}
\providecommand{\XOR}{\csword{XOR}}
\providecommand{\xT}{\csword{T}}
\providecommand{\xPpoly}{\csword{P/poly}\xspace}
\providecommand{\xP}{\csword{P}}
\providecommand{\RP}{\csword{RP}}
\providecommand{\BPP}{\csword{BPP}}
\providecommand{\BQP}{\csword{BQP}}
\providecommand{\EQP}{\csword{EQP}}
\providecommand{\EQPC}{\ensuremath{\csword{EQP}_\mathbb{C}}\xspace}
\providecommand{\EQPK}{\ensuremath{\csword{EQP}_K}\xspace}
\providecommand{\EQPQ}{\ensuremath{\csword{EQP}_{\Qbar}}\xspace}
\providecommand{\QNR}{\ensuremath{\csword{QNR}}\xspace}
\providecommand{\jacobi}[2]{\left( \displaystyle\frac{#1}{#2} \right)}
\providecommand{\Z}{\mathbb{Z}}
\providecommand{\Zp}{\Z/p\Z}
\providecommand{\ket}[1]{\left| #1 \right\rangle}
\providecommand{\qpic}{$\langle\textrm{q}|\textrm{pic}\rangle$\xspace}
\providecommand{\NOT}{\ComplexityFont{NOT}\xspace}
\providecommand{\CNOT}{\ComplexityFont{CNOT}\xspace}
\providecommand{\CZ}{\ComplexityFont{CZ}\xspace}
\providecommand{\CCZ}{\ComplexityFont{CCZ}\xspace}
\providecommand{\XOR}{\ComplexityFont{XOR}\xspace}
\providecommand{\qq}{Q_{17}}
\providecommand{\ii}{\left[\jacobi{x}{17}=-1\right]}
\providecommand{\iii}{\left[\left(\frac{x}{17}\right)=-1\right]}
\begin{document}

%%
%% The "title" command has an optional parameter,
%% allowing the author to define a "short title" to be used in page headers.
\title{Evaluating NISQ Devices with Quadratic Nonresidues}

% TGD
\author{Thomas~G.~Draper%,~\IEEEmembership{}%
\IEEEcompsocitemizethanks{\IEEEcompsocthanksitem T. Draper is with the Center for Communication Research at La Jolla,
4320 Westerra Court, San Diego, CA, 92121.\protect\\
E-mail: tdraper@ccrwest.org}
% note need leading \protect in front of \\ to get a newline within \thanks as
% \\ is fragile and will error, could use \hfil\break instead.
% \thanks{Manuscript received April 19, 2005; revised August 26, 2015.}
}

% The paper headers
% TGD
% \markboth{Journal of \LaTeX\ Class Files,~Vol.~14, No.~8, August~2015}%
% {Shell \MakeLowercase{\textit{et al.}}: Bare Demo of IEEEtran.cls for Computer Society Journals}
% The only time the second header will appear is for the odd numbered pages
% after the title page when using the twoside option.
%
% TGD
\IEEEtitleabstractindextext{%
\begin{abstract}
  We propose a new method for evaluating NISQ devices.
  This paper has three distinct parts.
  First, we present a new quantum algorithm that solves a two hundred year old problem of finding quadratic nonresidues (QNR) in polynomial time.
  We show that QNR is in \EQPC, while it is still unknown whether QNR is in \P.
  Second, we present a challenge to create a probability distribution over the quadratic nonresidues.
  Due to the theoretical complexity gap, a quantum computer can achieve a higher success rate than any known method on a classical computer.
  A device beating the classical bound indicates quantum advantage or a mathematical breakthrough. 
  % Third, we execute the simplest version of this challenge (requiring 4 qubits) on current NISQ devices and report their success rates.
  Third, we derive a simple circuit for the smallest instance of the quadratic nonresidue test and run it on a variety of currently available NISQ devices.
  We then present a comparative statistical evaluation of the NISQ devices tested.

\end{abstract}
}

% \ccsdesc[500]{Hardware~Quantum computation}
% \ccsdesc[500]{Theory of computation~Quantum complexity theory}
% \ccsdesc[500]{Theory of computation~Quantum complexity theory}
% \ccsdesc[100]{Mathematics of computing}
% \ccsdesc[500]{Computing methodologies~Number theory algorithms}
%%
%% Keywords. The author(s) should pick words that accurately describe
%% the work being presented. Separate the keywords with commas.
% \keywords{NISQ, quantum advantage, quadratic nonresidue, quantum algorithm}

% make the title area
\maketitle

\IEEEdisplaynontitleabstractindextext

\IEEEpeerreviewmaketitle

\IEEEraisesectionheading{\section{Introduction}\label{sec:introduction}}

% \IEEEPARstart{A}{s} new NISQ devices emerge, we are faced with the challenge of how to evaluate them.
% \IEEEPARstart{C}{omparing} and evaluating new NISQ devices is challenging.
\IEEEPARstart{Q}{uantum} advantage is an elusive result to establish for emerging NISQ devices.
Various approaches restrict communication, circuit depth, or circuit width to show a quantum advantage~\cite{Centrone_2021, Kumar_2019,Arrazola_2018,Bravyi_2018,aaronson2016complexitytheoretic,46227}.

We seek to demonstrate quantum advantage via an alternate route.
We build a test based on a centuries-old math problem which lies at the very heart of number theory: quadratic reciprocity.
Instead of attacking a problem too hard for modern supercomputers, we invite NISQs to run an algorithm to generate a probability distribution, using limited resources, that a classical computer cannot.

In particular, given a prime $p$, the challenge is to create a probability distribution only over the quadratic nonresidues of $p$.
As will be shown, a quantum computer can create a uniform superposition of the quadratic nonresidues modulo $p$ in polynomial time.
Proving any task is hard, is a challenging task in and of itself.
We avoid the difficulty of proving the task to be hard by noting that since the time of Gauss\cite{gauss_qnr}, mathematicians have tried and failed to find an algorithm that produces quadratic nonresidues in \xP.
Thus, this problem is considered hard for the same reason factoring is considered hard: The problem is very old, and lots of talented mathematicians have failed to solve it.
Not being able to find quadratic nonresidues in polynomial time is often referred to as an ``embarrassment'' in number theory, since we can easily find quadratic nonresidues in random polynomial time (\RP).
% Any algorithm which produces a distribution that samples only over the quadratic nonresidues of $p$, solves this problem.

To turn this quantum/classical difference into a test for NISQs, we compare the success rate of algorithms that are allowed to compute a single Jacobi symbol (used to identify quadratic residues and nonresidues).
With this restriction, a perfect quantum computer succeeds with probability $1$, whereas a classical computer succeeds with probability $\frac{3}{4}$.
If a NISQ succeeds more than $\frac{3}{4}$ of the time, we have evidence of quantum advantage.

Favorable aspects of this approach include:
\begin{itemize}
  \item {\bf Implementation agnostic.} Quantum algorithms that are tailored to what a particular NISQ can do well might not run as well on a different NISQ. Having a math-inspired algorithm doesn't intentionally favor any particular architecture.
  \item {\bf Infinite tests.} This approach provides an endless supply of problems. Little effort is needed to find new problems and prove they are hard.
  \item {\bf Ease of execution.} As we will see, a NISQ can have as few as four qubits and still run a meaningful test. Evaluation of a NISQ can be done on a wide variety of sizes, giving a better insight to how the performance degrades as size increases.
  \item {\bf Better comparison.} In comparing two NISQs, a performance certificate would consist of the prime targeted, the circuit used, and the score obtained. If a more efficient circuit for a particular prime is found, all parties can run the test with the improved circuit. That way the test remains a test of the hardware.
\end{itemize}

\subsection{Overview of topics}

\noindent \QNR is in \EQPC
\begin{itemize}
  \item {\bf Quadratic nonresidues:} Provide basic number theory background.
  \item {\bf Grover's algorithm:} Revisit how a Grover iteration inverts about the mean.
  \item {\bf Amplifying quadratic nonresidues:} Show how complex rotations on quadratic nonresidues before a Grover iteration creates a perfect superposition.
  \item {\bf Algorithmic complexity:} Show that the algorithm lies in \EQPC.
\end{itemize}

\noindent Design a \QNR based test
\begin{itemize}
  \item {\bf General test for NISQ devices:} Algorithm to generate circuit test for arbitrary primes where $p\equiv 1\bmod 8$.
  \item {\bf Fermat primes:} Show that a simpler test is available for Fermat primes.
  \item {\bf Quantum circuit:} Construct a circuit for the smallest interesting Fermat prime ($p=17$).
  \item {\bf The \xPpoly problem:} Examine the pitfalls of over-optimizing circuits.
\end{itemize}

\noindent Results of running the \QNR test on current NISQs
\begin{itemize}
  \item {\bf Test success rate on current NISQs:} Show the success rates of currently available NISQs.
  \item {\bf Test uniformity on current NISQs:} Evaluate the likelihood of the samples coming from a uniform distribution.
\end{itemize}

\section{Quadratic Nonresidues}
\begin{defn}
  Let $a,p \in \Z$ where $\gcd(a,p)=1$.
  If $x^2 \equiv a \bmod{p}$ has a solution, then $a$ is a \emph{quadratic residue} modulo $p$.
  Otherwise, $a$ is a \emph{quadratic nonresidue} modulo $p$.
\end{defn}

\begin{defn}
  Let $p$ be an odd prime and $a\in \Z$.
  The {\bf Legendre symbol} is defined as
  \begin{equation}
    \jacobi{a}{p}=
  \begin{cases}
    1&\text{if }a\text{ is a quadratic residue}\bmod p,\\
    -1&\text{if }a\text{ is a quadratic nonresidue}\bmod p,\\
    0&\text{if }a\equiv 0\bmod{p}.\\
  \end{cases}
\end{equation}
\end{defn}

Finding a quadratic nonresidue modulo a prime $p$ is an easy task.
A random choice of $x \in \{1, \ldots , p-1\}$ has a 50\% chance of being a quadratic nonresidue.
If $p$ is an $n$-bit prime, the Legendre symbol %${\scriptscriptstyle \jacobi{x}{p}}$
can be computed using $\log n$ multiplications, and thus in $O(M(n)\log
n)=O(n\log^2 n)$ time ~\cite{DBLP:journals/corr/abs-1004-2091,harvey:hal-02070778}.
The result certifies whether or not $x$ is a quadratic nonresidue.
% can be computed in $O(n\log^2 n)$ time~\cite{DBLP:journals/corr/abs-1004-2091} and certifies whether or not $x$ is a quadratic nonresidue.
Surprisingly, there is no known method for generating a quadratic nonresidue in polynomial time.
Assuming the Generalized Riemann Hypothesis, there exists a quadratic nonresidue less than $O(\log^2 p)$~\cite[p. 34]{1993--cohen}, that can therefore be found in $O(n^3\log^2 n)$ time by incremental search.
This paper presents a new quantum algorithm that finds a quadratic nonresidue in $O(n\log^2 n)$ time, independent of the Riemann Hypothesis.

\subsection{Notation and terminology}
The \emph{Jacobi symbol} and \emph{Kronecker symbol} are both generalizations of the Legendre symbol, and share the same notation. Efficient calculations of the Legendre symbol often use these more general forms, and they will be assumed whenever necessary in this paper.

Given a logical statement $Q$ on some number of variables $x_0, x_1,\ldots$, we will use the notation $[Q(x_0, x_1, \ldots)]$ for the Boolean function with inputs $x_0, x_1, \ldots$ whose output is $1$ or $0$, dependent on whether $Q(x_0, x_1,\ldots)$ is True $(1)$ or False $(0)$.

For an integer variable $x$, the bit variables $x_i$ will be defined from the binary form of $x=x_n\cdots x_2x_1x_0$, with $x_0$ being the least significant bit. Using this notation, we can express the parity function as $[x\mbox{ is odd}]=x_0$.

% Throughout the paper, we will refer to ``deterministic polynomial time''.
% This awkward phrasing is emphasizing the difference between what it means for a classical and quantum computer to run deterministically.
% When a classical computer runs a fixed circuit, with a fixed input, the output is deterministic.
% When a quantum computer runs a fixed circuit, with a fixed input, the output can still be random.
% When a classical computer runs a deterministic algorithm on a fixed input, the output is also deterministic.
% For a quantum computer, the output does not have to be deterministic, even when the algorithm is deterministic and the input is fixed.
% The new algorithm for finding quadratic nonresidues on a quantum computer is deterministic, but the output is not.
% The time it takes for the algorithm to run is deterministic, but the output is not.
% The point will be that after a polynomial number of gates, a quantum computer will have found a quadratic nonresidue (although it is unlikely to be the same one if the algorithm is run again), whereas no polynomial time algorithm exists for finding a quadratic nonresidue on a classical computer.

\subsection{Basic number theory for quadratic residues}
Recall the following facts about quadratic residues for an odd prime $p$:
\begin{equation}\label{pos}
    \left|\left\{a\in\Zp : \jacobi{a}{p}=1\right\}\right|= \frac{p-1}{2}
  \end{equation}
\begin{equation}\label{neg}
    \left|\left\{a\in\Zp : \jacobi{a}{p}=-1\right\}\right|= \frac{p-1}{2}
  \end{equation}
  \begin{equation}\label{qnr1}
    \jacobi{-1}{p}=(-1)^{\frac{p-1}{2}}=
      \begin{cases}
        1 & \text{if } p\equiv 1\bmod{4}\\
        -1 & \text{if } p\equiv 3\bmod{4}\\
      \end{cases}
  \end{equation}
  \begin{equation}\label{qnr2}
    \jacobi{2}{p}=(-1)^{\frac{p^2-1}{8}}=
      \begin{cases}
        1 & \text{if } p\equiv 1,7\bmod{8}\\
        -1 & \text{if } p\equiv 3,5\bmod{8}\\
      \end{cases}
  \end{equation}

Using equations (\ref{qnr1}) and (\ref{qnr2}), we see that $-1$ or $2$ is a quadratic nonresidue unless $p\equiv 1\bmod{8}$.
Thus the primes for which finding a quadratic nonresidue is nontrivial are congruent to $1$ modulo $8$.

\begin{lemma}
  Let $p$ be prime such that $p \equiv 1 \bmod{4}$.
  If we take $1,2,\ldots,p-1$ as our nonzero congruence class representatives,
  then half of the quadratic nonresidues are even and half are odd.
  % Equivalently,
\end{lemma}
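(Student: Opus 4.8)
The plan is to exhibit a parity-reversing involution on the set of quadratic nonresidues, which immediately forces the even and odd nonresidues to be equinumerous. The natural candidate is the negation map $a \mapsto p-a$ acting on the representatives $\{1, 2, \ldots, p-1\}$.

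First I would use the hypothesis $p \equiv 1 \bmod{4}$ to invoke equation (\ref{qnr1}), which gives $\jacobi{-1}{p}=1$; that is, $-1$ is itself a quadratic residue. By the multiplicativity of the Legendre symbol, I then have $\jacobi{p-a}{p} = \jacobi{-a}{p} = \jacobi{-1}{p}\jacobi{a}{p} = \jacobi{a}{p}$, so the map $a \mapsto p-a$ carries quadratic nonresidues to quadratic nonresidues (and residues to residues). This is the crucial step: the map preserves the quadratic character precisely because $\jacobi{-1}{p}=1$.

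Next I would observe that since $p$ is odd, $a$ and $p-a$ always have opposite parity, and $a = p-a$ is impossible (it would require $a = p/2 \notin \Z$); hence the map is a fixed-point-free involution on $\{1, \ldots, p-1\}$. Restricting it to the nonresidues, it therefore provides a bijection between the even nonresidues and the odd nonresidues, proving the two sets have equal cardinality. As a consistency check, the total number of nonresidues is $\frac{p-1}{2}$ by (\ref{neg}), and $p \equiv 1 \bmod{4}$ makes $\frac{p-1}{4}$ an integer, so each parity class contains exactly $\frac{p-1}{4}$ elements.

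I do not anticipate a serious obstacle, as the argument is a short bijection once the right involution is chosen. The only point genuinely requiring care is the verification that negation preserves the quadratic character, which hinges entirely on the hypothesis $p \equiv 1 \bmod{4}$: without it one would have $\jacobi{-1}{p}=-1$ by (\ref{qnr1}), the map $a \mapsto p-a$ would swap residues with nonresidues, and the pairing argument would collapse.
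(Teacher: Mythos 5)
Your proposal is correct and follows essentially the same argument as the paper: both use the fact that $\jacobi{-1}{p}=1$ for $p\equiv 1\bmod 4$ to show the fixed-point-free, parity-reversing map $a\mapsto p-a$ pairs each odd nonresidue with a unique even one, giving $\frac{p-1}{4}$ of each. Your write-up is slightly more explicit about multiplicativity of the Legendre symbol and the involution structure, but the substance is identical.
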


\begin{proof}
  By (\ref{qnr1}), $-1$ is a quadratic residue when $p \equiv 1 \bmod{4}$.
  Thus if $x$ is a quadratic nonresidue, then $-x\equiv p-x\bmod{p}$ is also a quadratic nonresidue.
  Since $p$ is odd, $x$ and $p-x$ have different parities, and $x\neq p-x$.
  % Since $p$ is odd, if $x$ is odd then $p-x$ is even.
  % Likewise if $x$ is even then $p-x$ is odd.
  % In addition, since $p$ is odd, for every $x$, $x\neq p-x$.
  Therefore, every odd quadratic nonresidue has a unique matching even nonresidue.
  Using (\ref{neg}), the number of nonresidues is $\frac{p-1}{2}$.
  Thus the number of odd (respectively even) nonresidues is $\frac{p-1}{4}$.
  % The proof is identical in the case of residues as opposed to nonresidues.
\end{proof}

% Some congruence classes of primes have a fixed quadratic nonresidue. If $p\equiv
% 3\bmod 4$, then $-1$ is a quadratic nonresidue. If $p\equiv 3\text{ or }5\bmod
% 8$, then $2$ is a quadratic nonresidue. Since a quadratic nonresidue will always
% be known in constant time in these cases, the primes we are interested in are
% all $p\equiv 1\bmod 8$.

It should be noted that there are additional number theoretic techniques that can be used to help us find quadratic nonresidues.
\begin{lemma}\label{mod_lemma}
 Let $p\equiv 1 \bmod 4$ be prime, and let $q$ be an odd prime less than $p$. Then $\jacobi{q}{p}=\jacobi{p\bmod q}{q}$.
\end{lemma}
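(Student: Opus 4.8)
The plan is to invoke the law of quadratic reciprocity, which for distinct odd primes $p$ and $q$ states that
\begin{equation}
  \jacobi{q}{p}\jacobi{p}{q}=(-1)^{\frac{p-1}{2}\cdot\frac{q-1}{2}}.
\end{equation}
Since $q<p$ and both are prime we have $\gcd(p,q)=1$, so both symbols on the left are well-defined and equal to $\pm 1$. The hypothesis $p\equiv 1\bmod 4$ means $\frac{p-1}{2}$ is even, so the exponent on the right is even regardless of $q$, and the right-hand side collapses to $1$. Because a product of two values in $\{+1,-1\}$ equals $1$ exactly when the two values agree, I would immediately conclude $\jacobi{q}{p}=\jacobi{p}{q}$.

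The second step is to rewrite $\jacobi{p}{q}$ using the periodicity of the symbol in its numerator: for any integer $a$ one has $\jacobi{a}{q}=\jacobi{a\bmod q}{q}$, since whether $a$ is a square modulo $q$ depends only on the residue class of $a$. Applying this with $a=p$ gives $\jacobi{p}{q}=\jacobi{p\bmod q}{q}$, and chaining the two equalities yields the claim.

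I do not expect a genuine obstacle here; the only point requiring a word of care is that $p\bmod q$ may be even, so the reduced symbol $\jacobi{p\bmod q}{q}$ should be read as the Jacobi (or Kronecker) symbol rather than strictly the Legendre symbol. This is harmless because the paper has already adopted the convention that these generalizations are used whenever necessary, and the periodicity and multiplicativity properties invoked above hold for the Jacobi symbol verbatim. Thus the entire argument reduces to one application of reciprocity together with one application of numerator-periodicity, with the congruence condition on $p$ doing the single piece of real work by eliminating the reciprocity sign.
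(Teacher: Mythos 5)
Your proposal is correct and follows essentially the same route as the paper: one application of quadratic reciprocity, with $p\equiv 1\bmod 4$ killing the sign $(-1)^{\frac{p-1}{2}\cdot\frac{q-1}{2}}$, followed by periodicity of the symbol in its numerator. Your added remark about reading $\jacobi{p\bmod q}{q}$ as a Jacobi/Kronecker symbol when $p\bmod q$ is even is a fine point the paper handles implicitly via its stated notational convention.
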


\begin{proof}
 Using Gauss's quadratic reciprocity and the fact that $p\equiv 1 \bmod 4$, we see that
  \[ \jacobi{q}{p}=(-1)^{\frac{p-1}{2}\cdot\frac{q-1}{2}}\jacobi{p}{q}=\jacobi{p}{q}=\jacobi{p\bmod q}{q}.\]
\end{proof}

Lemma~\ref{mod_lemma} implies that $q$ is a quadratic nonresidue of $p$ if and only if $p\bmod q$ is a quadratic nonresidue of $q$. This can be used to find classes of primes $p$ where $3,5,7,$ or some other small prime is a quadratic nonresidue.
For example, if $p\equiv 1\bmod4$ and $p\equiv 2\bmod 3$, then $3$ is a quadratic nonresidue modulo $p$.
Results like this would improve a classical search, but we are going to disallow
this approach and other number theoretic boosts, since they fail to solve the problem in all cases, and reduce the number of interesting primes to use in our classical/quantum comparison.

\section{Grover's Algorithm}
In 1996, Lov Grover~\cite{Grover96} devised an unstructured search algorithm using amplitude amplification. In the case of finding a single marked entry out of $N$, Grover's algorithm reduces the number of black box queries from $O(N)$ to $O(\sqrt{N})$. When the acceptable answer space has size $k$, the expected number of black box queries drops from $O(N/k)$ to $O(\sqrt{N/k})$.

Each step of Grover's algorithm involves flipping the signs of the amplitudes of the marked states and then inverting about the mean.

\subsection{Inversion about the mean}
Although Grover's algorithm uses only real amplitudes, the inversion about the mean holds for complex numbers and is a 2D inversion about a point.
Let $\bar{\alpha}$ be the mean of the amplitudes of a given state.
\[\bar{\alpha} = \frac{1}{2^n}\sum_{x=0}^{2^n-1}\alpha_x.\]
The inversion step of Grover's algorithm proceeds as follows.
% \begin{enumerate}
\begin{qlist}
  \item Initial state before inversion
    \[ \displaystyle\sum_{x=0}^{2^n-1}\alpha_x\ket{x}
    \]
  \item Quantum Hadamard Transform
    \[ \displaystyle\sum_{x=0}^{2^n-1}\frac{\alpha_x}{2^{n/2}}\sum_{y=0}^{2^n-1}(-1)^{x\cdot{}y}\ket{y}
    \]
  \item Negate the phase of the zero state
      % \displaystyle{\left(\sum_{x=0}^{2^n-1}\frac{\alpha_x}{2^{n/2}}\sum_{y=0}^{2^n-1}(-1)^{x\cdot{}y}\ket{y}\right)-\sum_{x=0}^{2^n-1}\frac{\alpha_x}{2^{n/2}}\cdot{}2\ket{0}}

    \[ {\left(\sum_{x=0}^{2^n-1}\frac{\alpha_x}{2^{n/2}}\sum_{y=0}^{2^n-1}(-1)^{x\cdot{}y}\ket{y}\right)}
    { -\sum_{x=0}^{2^n-1}\frac{\alpha_x}{2^{n/2}}\cdot{}2\ket{0}}
    \]
  \item Quantum Hadamard Transform
  \[
    \displaystyle\sum_{x=0}^{2^n-1}\alpha_x\ket{x}-2\bar{\alpha}\sum_{x=0}^{2^n-1}(-1)^{x\cdot{}0}\ket{x}
    \]
  \item Change global phase by $-1$ (NOP)
  \[
    \displaystyle\sum_{x=0}^{2^n-1}\left(2\bar{\alpha}-\alpha_x\right)\ket{x}
    \]
% \end{enumerate}
\end{qlist}

Thus inversion about the mean maps each amplitude
\[\alpha_x \mapsto 2\bar{\alpha}-\alpha_x.\]
In using an inversion about the mean to find quadratic nonresidues, we will use complex amplitudes.

\section{Amplify Quadratic Nonresidues}
It is well known that a single Grover iteration can amplify one of four states to probability one of observation.
In general, Grover iterations only use real amplitudes and do not tend to perfectly amplify the target states.
By using complex amplitudes and balancing the even and odd nonresidues with each other, we can ensure perfect amplification after a deterministic number of steps.

Let $n$ be the least integer such that $2^n>p$ and let $N=2^n$.
By varying the Grover step slightly, we may generate quadratic nonresidues for $p\equiv 1\bmod{8}$.
Instead of negating the phase, we will rotate the complex phase of the even and
odd quadratic nonresidues in such a way that the mean moves to a position that
will send all non-nonresidues ($\ket{0}$, residues and values greater than or equal to $p$  but less than $N$) to zero upon inversion.
It is important to restrict the rotation to quadratic nonresidues less than $p$.
The distribution of residues and nonresidues from $p$ to $2^n-1$ is unknown and is unlikely to be equally partitioned.
% No phase rotation will be applied to $\ket{0}$.

Note that after the inversion, all nonresidues will have equal probability of being seen, and that all are equally likely to be observed.

\subsection{Example: Amplify the nonresidues for \protect{\boldmath{$p=41$}}, \protect{\boldmath{$N=64$}}}
In this example, we will use QNR to denote {\it quadratic nonresidues}.
\begin{qlist}
  \item Initialize all $64$ values ($\ket{0},\ldots ,\ket{63}$) in an equal superposition.
    Since all values have an amplitude of $1/8$, the target mean will be $1/16$.
    Since $p=41$, there are $20$ QNRs, of which $10$ are even and $10$ are odd.

\begin{figure}[h]
\centering
  \includegraphics[width=2.5in]{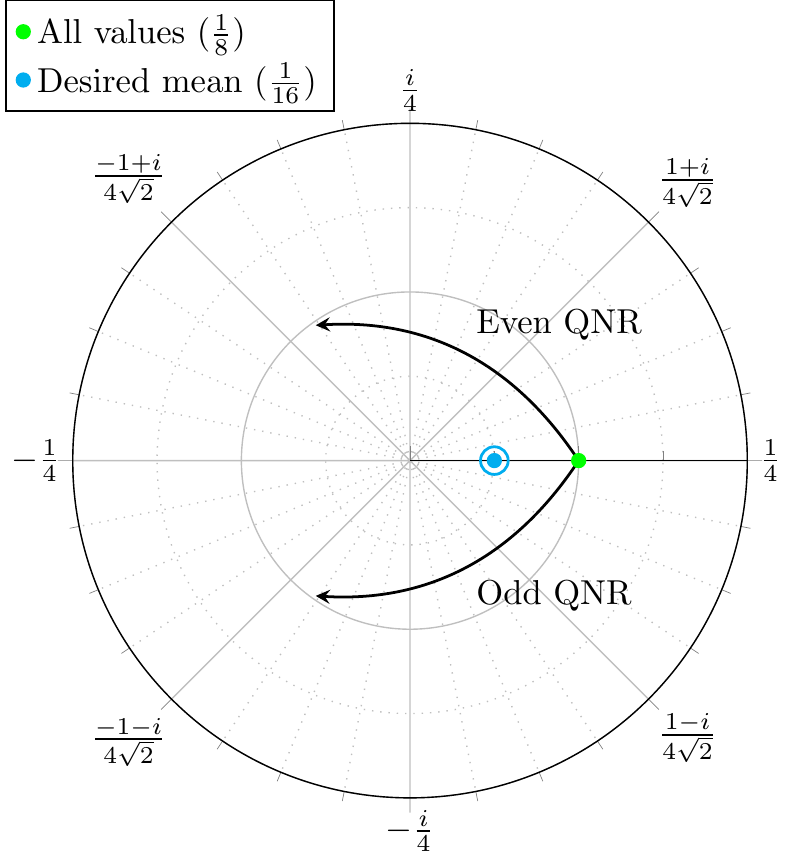}
    \caption{Rotate QNRs, even by $\theta$ and odd by $-\theta$.}
    % \Description{For $p=41$, the even quadratic nonresiudes with rotate to the upper left quadrant, and the odd quadratic nonresidues will rotate to the lower left quadrant.}
\label{p41_1}
\end{figure}

\item Let $\theta=\hbox{arccos}(-3/5)$. The even QNRs will be rotated by $\theta$ and the odd QNRs will be rotated by $-\theta$ (figure~\ref{p41_1}).
  After rotation, the even/odd QNR amplitudes are $-\frac{3}{40}\pm\frac{i}{10}$.
  In calculating the mean, the imaginary components of the even and odd
  quadratic nonresidues will exactly cancel each other.
  The remaining average on the real line is thus
  \[\frac{1}{64}\left(20\cdot\left(-\frac{3}{40}\right)+44\cdot\frac{1}{8}\right)=\frac{1}{16}.\]
  % Therefore, the mean is $\frac{1}{16}$.

    % \[\frac{10}{64}\cdot\frac{-3+4i}{40}+\frac{10}{64}\cdot\frac{-3-4i}{40}+\frac{44}{64}\cdot\frac{1}{8}=\frac{1}{16}.  \]
\begin{figure}[h]
\centering
  \includegraphics[width=2.5in]{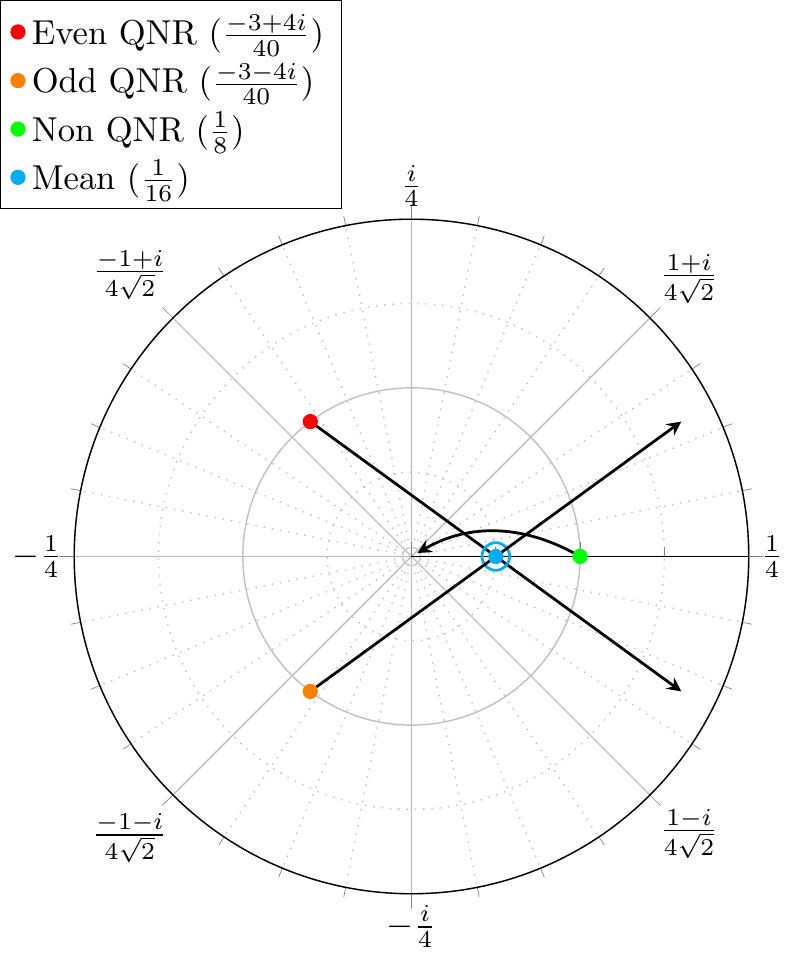}
    \caption{Invert about the mean.}
\label{p41_2}
\end{figure}
\item Invert all complex amplitudes about the mean. Each amplitude $a$ will invert to $2(\frac{1}{16})-a$ (figure~\ref{p41_2}). In particular, $\frac{1}{8}$ inverts to zero, and the other amplitudes do not.

\begin{figure}[h]
\centering
  \includegraphics[width=2.5in]{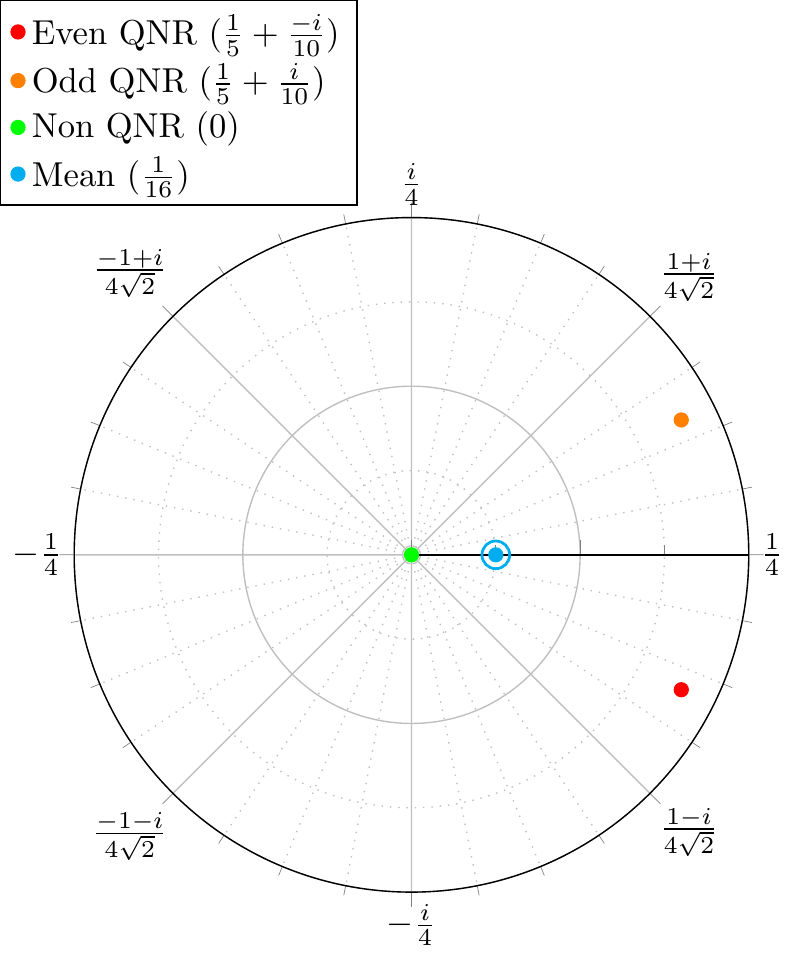}
    \caption{Final position where only QNRs have nonzero (albeit complex) amplitude.}
\label{p41_3}
\end{figure}
\end{qlist}

After inversion, all of the nonresidues are equally likely to be observed
(figure~\ref{p41_3}).
Although their phases are different, the magnitudes of their amplitudes are identical, and the magnitude dictates the probability of observation.

\subsection{Computing the angle of rotation}
To find the correct angle of rotation for a given prime, we need to find the angle which moves the mean of the amplitudes to half of the amplitude of the equal superposition.
Since the imaginary component of the mean will be zero, we seek the real component $a$ for the $(p-1)/2$ QNRs. 
\[ a\left(\frac{p-1}{2}\right) + 1\cdot\left(2^n-\frac{p-1}{2}\right)= \frac{1}{2}\cdot 2^n\]
Solving for $a$ gives
\[ a = 1- \frac{2^n}{p-1} \]
and thus the angle of rotation is
\begin{align}
  \label{theta}
\theta=\mbox{arccos}\left(1-\frac{2^n}{p-1}\right). 
\end{align}

\subsection{Amplitude before and after inversion}
In an equal superposition of $N=2^n$ states, each state has amplitude $1/\sqrt{N}$. 
% Using $\theta$ from equation \eqref{theta},
% \begin{align*}
  % \cos \theta &= 1-\frac{N}{p-1}\\
  % \sin \theta &= \sqrt{1-(1-\frac{N}{p-1})^2}\\
              % &= \sqrt{\frac{2N}{p-1}-\frac{N^2}{(p-1)^2}}\\
  % &= \frac{\sqrt{N(2p-2-N)}}{p-1}
% \end{align*}
After a rotation by $\pm\theta$, the quadratic nonresidues have amplitude
\[ \frac{1}{\sqrt{N}}e^{\pm i\theta}. \]
  % \[ \frac{1}{\sqrt{N}}\left(\cos \theta \pm i\sin \theta\right). \]

  Since $\theta$ was chosen to move the amplitude mean from $\frac{1}{\sqrt{N}}$ to $\frac{1}{2\sqrt{N}}$, inversion about the mean maps the nonresidue amplitudes to 
  \[2\frac{1}{2\sqrt{N}}-\frac{1}{\sqrt{N}}e^{\pm i \theta} =\frac{1}{\sqrt{N}}\left(1-e^{\pm i \theta}\right). \]
Since over half of the amplitudes were sent to $0$ in the inversion, the nonresidue amplitude length must have increased.
The new amplitudes of the $\frac{p-1}{2}$ nonresidues have squared length
\begin{align*}
  &\frac{1}{N}\left((1-\cos\pm\theta)^2+(\sin\pm\theta)^2\right)\\
  =&\frac{2}{N}\left(1-\cos\theta\right)\\
  =&\frac{2}{N}\left(\frac{N}{p-1}\right)\\
  =&\frac{2}{p-1}.\\
\end{align*}

\section{Algorithmic Complexity}
\subsection{Complexity classes}
Early attempts to find a quantum analog for \xP led to the definition of Exact Quantum Polynomial Time (\EQP)~\cite{Bernstein97quantumcomplexity}.
Unfortunately, which algorithms \EQP deemed polynomial time depended on the finite generating set chosen.
In practice, researchers found \BQP, the quantum analog of \BPP, more useful for discussing what a quantum computer can do efficiently.

In an effort to talk about what a quantum computer can do in polynomial time, the complexity class \EQPK was introduced~\cite{Adleman97quantumcomputability}.
\EQPK allows for controlled unitary gates $U$ on a single qubit where the coefficients of $U$ are from a set $K$.
In practice, the interesting cases come when $K$ is infinite.
For finding quadratic nonresidues, we are interested in the case where $K=\mathbb{C}$.
Note that the coefficients for $U$ may be drawn from $\Qbar$ instead of $\mathbb{C}$~\cite{Adleman97quantumcomputability}.
The discrete logarithm problem over $\mathbb{Z}/p\mathbb{Z}$ is in \EQPQ~\cite{Mosca04exactquantum}.

Although using an infinite generating set might seem bad, the Solovay-Kitaev theorem~\cite{dawson2005solovaykitaev} proves the single qubit unitary gates can be approximated with exponential precision in polynomial time from a reasonable finite generating set. A more efficient implementation of this idea can be found in \cite{selinger2012efficient}.

We will show that the quadratic nonresidue algorithm is in \EQPC.

\subsection{QC algorithm for finding QNRs in polynomial time}
\label{qc_alg}
Let $M(n)$ be the time required to multiply two $n$ bit numbers. 
The entire algorithm will have complexity equal to the Jacobi symbol, $O(M(n)\log n)=O(n \log^2 n)$~\cite{DBLP:journals/corr/abs-1004-2091,harvey:hal-02070778}.

Given a prime $p\equiv 1\bmod 8$, we define the following for use in algorithm~\ref{quantum_qnr}.
\[ N/2 < p < N=2^n, \]
\[ \theta=\mbox{arccos}\left(1-\frac{N}{p-1}\right), \]
\[ x_0 \text{ is the low bit of } x, \]
\begin{align*}                                                                                                             \label{indicator}
  f(x)&=\left[\jacobi{x}{p}=-1\text{ and } 0\le x<p\right]\\                                                              &=
  \begin{cases}                                                                                                               1&\text{if }x<p\text{ is a quadratic nonresidue},\\                                                                   0&\text{otherwise}.\\
  \end{cases}
\end{align*}
Even though we want to rotate the even QNRs by $\theta$ and the odd QNRs by $-\theta$, for most quantum computers it is likely to be less work to rotate the odd QNRs by $-2\theta$ and then rotate all QNRs by $\theta$.
This is reflected in the algorithm~\ref{quantum_qnr} and graphically in Figure~\ref{fig:gen_alg}.

\begin{algorithm}
  \caption{
    Quantum polynomial time algorithm for uniformly sampling quadratic nonresidues of a prime $p\equiv 1 \bmod 8$.
    Let $n, N, \theta, x_0 \text{ and } f(x)$ be defined as in section~\ref{qc_alg}.
    % Let $f$ be as defined in equation~\eqref{indicator} and $\theta=\mbox{arccos}\left(1-\frac{N}{p-1}\right)$.
}
  \label{quantum_qnr}
  \begin{noindqlist}
  \item $[O(n)]$ Apply $H^{\otimes n}$ to \( \ket{0}^{\otimes n} \) (Hadamard transform).
    \[ \frac{1}{\sqrt{N}}\sum_{x=0}^{N-1} \ket{x} \]
  \item $[O(M(n) \log n)]$ Compute Jacobi symbol indicator.
\[\frac{1}{\sqrt{N}} \sum_{x=0}^{N-1} \ket{x}\ket{\left[\jacobi{x}{p}=-1\right]}\]
\item $[O(n)]$ Compute the indicator for $[x<p]$~\cite{comparator}.
\[\frac{1}{\sqrt{N}} \sum_{x=0}^{N-1} \ket{x}\ket{\left[\jacobi{x}{p}=-1\right]}\ket{[x<p]}\]
  \item $[O(1)]$ Rotate odd QNRs less than $p$ by $-2\theta$, conditioned on $[x<p],\left[\jacobi{x}{p}=-1\right],$ and $x_0$.
  \[\frac{1}{\sqrt{N}} \sum_{x=0}^{N-1} e^{-i2\theta f(x)x_0}\ket{x}\ket{\left[\jacobi{x}{p}=-1\right]}\ket{[x<p]}\]
\item $[O(1)]$ Rotate all QNRs less than $p$ by $\theta$, conditioned on $[x<p]$ and $\left[\jacobi{x}{p}=-1\right]$.
  \[\frac{1}{\sqrt{N}} \sum_{x=0}^{N-1} e^{i\theta f(x)(1-2x_0)}\ket{x}\ket{\left[\jacobi{x}{p}=-1\right]}\ket{[x<p]}\]
\item $[O(M(n)\log n)]$ Uncompute indicator functions.
  \[\frac{1}{\sqrt{N}} \sum_{x=0}^{N-1} e^{i\theta f(x)(1-2x_0)}\ket{x}\]
\item $[O(n)]$ Use a Grover step to invert about the mean $\alpha=\frac{1}{2\sqrt{N}}$.
    \[ \frac{1}{\sqrt N}\sum_{x=0}^{N-1} \left(1-e^{i\theta f(x) (1-2x_0)}\right)\ket{x} \]
  \item $[O(n)]$ Observe a quadratic nonresidue modulo $p$.
This observation samples uniformly among all quadratic residues modulo $p$.
  \end{noindqlist}
\end{algorithm}

\begin{figure}[h]
\centering
  \includegraphics[scale=1]{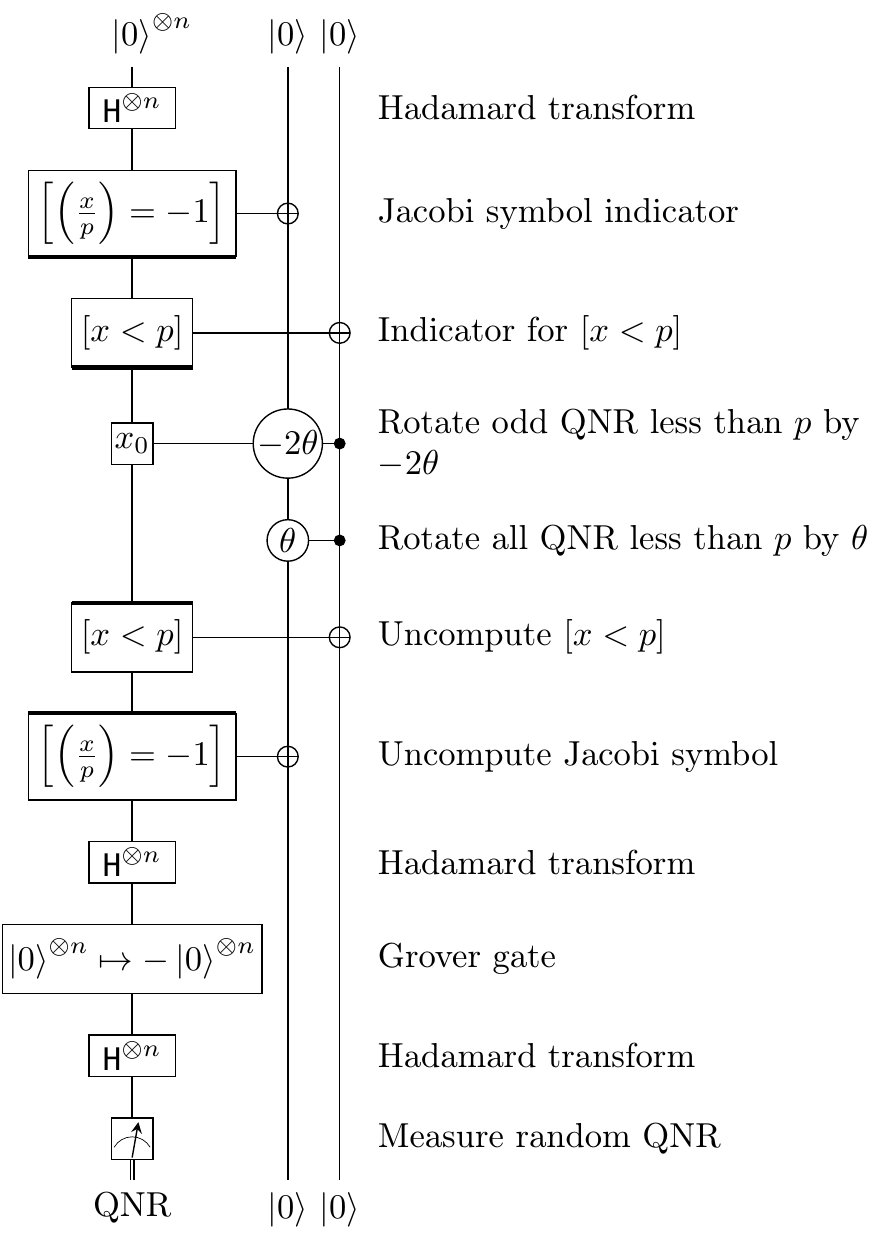}
  \caption{Wire diagram for algorithm~\ref{quantum_qnr} for sampling quadratic nonresidues}
\label{fig:gen_alg}
\end{figure}

\section{A Test for NISQ Devices}
In polynomial time, algorithm~\ref{quantum_qnr} creates a distribution for a given prime $p\equiv 1\bmod 8$ where only the quadratic nonresidues of $p$ are observed and their observation is uniform.
Any classical polynomial time algorithm capable of the same would be an exciting new result in number theory, solving a 200+ year old problem\cite{gauss_qnr}.
% Any classical polynomial time algorithm capable of the same would solve a problem that has eluded mathematicians since Gauss\cite{gauss_qnr}.
We define a test based on the minimal resources needed by a quantum computer to create this distribution and evaluate how well a classical computer could do with similar restrictions.

We then use this as a benchmark to judge how well a particular NISQ performs.
The two properties we will evaluate are:
\begin{itemize}
  \item The frequency of QNRs produced.
  \item The uniformity of the distribution of the observed QNRs.
\end{itemize}
Exceeding the classical bound also provides evidence that the NISQ is doing something quantum.

\subsection{Comparing quantum and classical resources}
How should two computers be given ``equal resources'' if the computers have fundamentally different building blocks?
Essentially, we need to agree on how to compare apples and oranges.
Since a quantum computer (QC) is supposed to be more powerful than a classical computer, some of the equivalences may seem unfair.
But the spirit of this comparison is to evaluate actions on each computer that take roughly the same amount of work.
We propose the following rationale for judging the classical and quantum algorithms presented to have relatively similar difficulty.

\begin{itemize}
  \item {\bf Hadamard vs.~random:}
    $H^{\otimes n}$ can be used to generate an $n$-bit random number on a QC.
    Allowing the classical computer access to a random oracle is roughly comparable to allowing the QC to perform a Hadamard transform in a single time slice, which can be used for the same purpose.
  \item {\bf Jacobi symbol:} This is the bulk of the computation done in both worlds.
    The QC is required to perform the computation reversibly.
  \item {\bf Controlled phase rotation:}
    There isn't a good classical analogue for phase rotations, but they are also a very small part of the quantum computation.
    One controlled and one doubly controlled qubit rotation will be used by the QC.
    % One single qubit rotation and one controlled qubit rotation will be used by the QC.
  \item {\bf Inequality indicator {$[x<p]$}:}
    The quantum algorithm requires this simple computation to target quadratic nonresidues less than $p$.
    Classically, this may be used to reduce a random $n$-bit number to a random integer $0\le x <p$, but it would do so in a non-uniform way.
  \item {\bf Compute rotation angle:} This simple classical computation provides the angle of rotation $\theta$ needed by the QC.
    We allow the classical computer to use this result as well, but it doesn't seem to be useful.
\end{itemize}

Of course, some may feel this isn't a fair comparison and propose something different.
One quibble might be that the NISQ is allowed to perform a computation and an uncomputation of the Jacobi symbol, whereas the classical computer only gets to perform a single computation.
To even things up, one suggestion might be to compare a NISQ to a classical computer with two calls to a Jacobi symbol oracle.
This is a reasonable alternative that will raise the threshold for the NISQ to beat.
For the purposes of this paper, we just compare a single Jacobi symbol calculation.

\subsection{The restricted classical test for finding quadratic nonresidues}
Let $N$ be the smallest power of~2 greater than $p$.
The quantum polynomial time algorithm for finding a quadratic nonresidue modulo $p$ consists of three $n$-qubit Hadamard transforms, one doubly controlled phase rotation, one singly controlled qubit phase rotation, one phase flip of the zero state, and the computation/uncomputation of $[(\frac{x}{p})=-1]$ and $[x<p]$.

This computation is dominated by computing the Jacobi symbol.
The quantum algorithm does not use any further number theoretic facts to find a quadratic nonresidue in polynomial time.

Algorithm~\ref{single} provides a classical approach to finding a quadratic nonresidue restricted to single Jacobi symbol evaluation and a source of randomness.
This simple algorithm will succeed 75\% of the time.
Beating this threshold provides evidence that something interesting is happening on our NISQ, since it is unknown how to asymptotically succeed more than 75\% of the time classically with the given computational restrictions.

\begin{algorithm}
  \caption{(Classical) Randomized polynomial time algorithm for finding a quadratic nonresidue modulo a prime $p$ using a single Jacobi symbol calculation.}
  \label{single}
  \begin{noindlist}
  \item Choose a random  $x$ in $\{1, \ldots , p-1\}$.
  \item Compute the Jacobi symbol for $x$.
  \item If {$\jacobi{x}{p}=-1$}, return $x$.
  \item Else, choose a random  $y$ in $\{1, \ldots , p-1\}$.
  \item Return $y$.
    % \State \textbf{return} $x$,
  % \Else
    % \State Choose a random  $y$ in $\{1, \ldots , p-1\}$.
    % \State \textbf{return} $y$.
  \end{noindlist}
% \begin{algorithmic}[1]
  % \Procedure{SingleJacobiQNR}{$p$}
  % \State Choose a random  $x$ in $\{1, \ldots , p-1\}$.
  % \State Compute the Jacobi symbol for $x$.
  % \If{$\jacobi{x}{p}=-1$}
    % \State \textbf{return} $x$,
  % \Else
    % \State Choose a random  $y$ in $\{1, \ldots , p-1\}$.
    % \State \textbf{return} $y$.
  % \EndIf
  % \EndProcedure
% \end{algorithmic}

\end{algorithm}

% \begin{easylist}
  % & Choose a random value $x$ in $\{1, \ldots , p-1\}$.
  % & Compute the Jacobi symbol for $x$. If $\jacobi{x}{p}=-1$, return $x$.
  % & Otherwise, choose a new random value $y$ in $\{1, \ldots , p-1\}$ and return $y$.
% \end{easylist}
% \vspace{\baselineskip}

\subsection{The restricted quantum test for finding quadratic nonresidues}
% We will use the algorithm from~\cite{qdp} to find quadratic nonresidues on a quantum computer.
Given a prime $p\equiv 1\bmod 8$, we define the following for use in algorithm~\ref{quantum_qnr}.
\[ N/2 < p < N=2^n, \]
\[ \theta=\mbox{arccos}\left(1-\frac{N}{p-1}\right), \]
\[ x_0 \text{ is the low bit of } x, \]
\begin{align*}                                                                                                             \label{indicator}
  f(x)&=\left[\jacobi{x}{p}=-1\text{ and } 0\le x<p\right]\\                                                              &=
  \begin{cases}                                                                                                               1&\text{if }x<p\text{ is a quadratic nonresidue},\\                                                                   0&\text{otherwise}.\\
  \end{cases}
\end{align*}
% Even though we want to rotate the even QNRs by $\theta$ and the odd QNRs by $-\theta$, for most quantum computers it is likely to be less work to rotate the odd QNRs by $-2\theta$ and then rotate all QNRs by $\theta$.
% This is reflected in the algorithm.

A perfect quantum computer would observe a quadratic nonresidue 100\% of the time, beating the bound of 75\% for the restricted classical computer.
A NISQ exceeding the 50\% bound illustrates that it is doing something better than a completely noisy calculation.
A NISQ exceeding the 75\% bound illustrates a power in excess of what a classical computer can do with a similar resource restriction and provides evidence that the NISQ is doing something quantum.

Even if we decide that 75\% is the incorrect quantum advantage threshold, as long as the quadratic nonresidue problem remains unsolved, there will always be some probabilty of success gap no matter what polynomial resources we allow the classical computer to use.
Thus, the higher a NISQ scores, the greater confidence we have that it is doing something interesting.

\section{Fermat Primes}

Some primes may have an easier $\theta$ rotation for nascent NISQ devices to implement.
% The $\theta$ rotation required by some primes may be easier for nascent NISQ devices to implement.
We explore one of these prime classes, hoping that future sophisticated NISQ devices will require no such crutch.
Consider primes of the form $2^m+1$. If $m$ has an odd factor $a>1$, then $2^{m/a}+1$ divides $2^m+1$. Thus if $2^m+1$ is prime, $m$ must be a power of~2. Primes of the form $2^{2^n}+1$ are called Fermat primes. Unfortunately, there are only five known Fermat primes:
\begin{align*}
  F_0 &= 2^1 + 1 = 3\\
  F_1 &= 2^2 + 1 = 5\\
  F_2 &= 2^4 + 1 = 17\\
  F_3 &= 2^8 + 1 = 257\\
  F_4 &= 2^{16} + 1 = 65537
\end{align*}

The first two Fermat primes, $3$ and $5$, are not congruent to $1\bmod 8$, and thus not ``hard''.
% However, the remaining Fermat primes provide three excellent QNR test candidates.
The remaining Fermat primes, $17$, $257$, and $65537$, provide three excellent QNR test candidates.

Although it is unknown if there are any more Fermat primes, the smallest Fermat number of unknown primality~\cite{Crandall02thetwenty-fourth} is $F_{33}$, which is a staggeringly large $8$-gigabit number. Even if $F_{33}$ were prime (which most mathematicians do not believe to be true), any NISQ capable of running a QNR test on $F_{33}$ would be stretching the definition of ``Intermediate Scale''.

$F_n$ requires $2^n+1$ bits to represent.
Using $N=2^{2^n+1}$, for a Fermat prime, the angle of rotation turns out to be exactly $\pi$, which is easy to implement since it is just a phase flip.
% \[ \theta=\mbox{arcsin}\left(\frac{2^{2^n+1}}{(2^{2^n}+1)-1}-1\right)+\frac{\pi}{2}=\text{arcsin}(1)+\frac{\pi}{2}=\pi.\label{eq:fermat} \]

\begin{align}
  \theta&=\mbox{arccos}\left(1-\frac{2^{2^n+1}}{(2^{2^n}+1)-1}\right)\\
  &=\text{arccos}(-1)=\pi.\label{eq:fermat}
\end{align}

Recall that, since $F_n\equiv 1\bmod 4$, $-1$ is a quadratic residue. Since $2^{2^n}\equiv -1$ is a quadratic residue of $F_n$, all quadratic nonresidues for $F_n$ will be less than $2^{2^n}$, and thus all potential quadratic nonresidues require only $2^n$ bits to represent.

Exactly half of the values $0\le x < 2^{2^n}$ will be quadratic nonresidues for $F_n$.
Zero is not typically counted as a regular quadratic residue, but since we are purposely leaving out $-1$, zero will take its place so that exactly half the values are quadratic (non)residues.
Thus for Fermat primes we can save one bit and rotate the even and odd nonresidues $\pm\pi/2$ instead of $\pm\pi$.

In this restricted case of Fermat primes, $[x<p]$ is $1$ on the domain $0\le x<2^{2^n}$.
Thus the computation can be further simplified by leaving out the computation of $[x<p]$.

Let $g$ be the Jacobi symbol indicator function as follows for algorithm~\ref{fermat_qnr}.

\begin{align*}                                                                                                             \label{indicator}
  g(x)&=\left[\jacobi{x}{2^{2^n}+1}=-1\right] \text{ where } 0\le x<2^{2^n}\\
      &=
  \begin{cases}
  1&\text{if }x\text{ is a quadratic nonresidue of }F_n,\\
  0&\text{otherwise}.\\
  \end{cases}
\end{align*}

\begin{algorithm}
  \caption{
    Polynomial time algorithm for uniformly sampling quadratic nonresidues of Fermat primes $F_n=2^{2^n}+1$ where $n>1$.
}
  \label{fermat_qnr}
  \begin{noindqlist}
  \item Initial superposition via Hadamard.
    \[\frac{1}{{2^{2^{n-1}}}} \sum_{x=0}^{2^{2^n}-1} \ket{x}\]
  \item Compute Jacobi symbol indicator.
  \[\frac{1}{{2^{2^{n-1}}}} \sum_{x=0}^{2^{2^n}-1} \ket{x}\ket{g(x)}\]
\item Controlled $Z$ gate on the low bit of $x$ and $g(x)$, which multiplies all odd QNRs by $-1$.
  \[\frac{1}{{2^{2^{n-1}}}} \sum_{x=0}^{2^{2^n}-1} (-1)^{x_0g(x)}\ket{x}\ket{g(x)}.\]
\item Perform an $S$ gate on $\ket{g(x)}$, which multiplies all QNRs by $i$.
  \[\frac{1}{{2^{2^{n-1}}}} \sum_{x=0}^{2^{2^n}-1} i^{2x_0g(x)}i^{g(x)}\ket{x}\ket{g(x)}.\]
\item Uncompute the indicator function $g$.
\[\frac{1}{{2^{2^{n-1}}}} \sum_{x=0}^{2^{2^n}-1} i^{(2x_0+1)g(x)}\ket{x}.\]
\item Perform a Hadamard transform, negate $\ket{0}$ and then perform another Hadamard transform. 
This Grover step inverts all states about the mean.
  \[\frac{1}{{2^{2^{n-1}}}} \sum_{x=0}^{2^{2^n}-1} \left(1-i^{(2x_0+1)g(x)}\right)\ket{x}.\]
\item Observe a quadratic nonresidue of $F_n$.
  \end{noindqlist}

\end{algorithm}

Note that at the conclusion of algorithm~\ref{fermat_qnr}, the probability amplitude for a state $\ket{x}$ is zero when $g(x)=0$. 
The observation samples uniformly among all quadratic nonresidues.

The Fermat primes also share another property that may be useful for minimizing the required NISQ computation.
Since exactly half of the values less than $2^{2^n}$ are quadratic nonresidues, the indicator function for quadratic nonresidues is balanced.
This means that there is a permutation $\sigma\colon \Z_2^n \to \Z_2^n$ where one of the coordinate functions of $\sigma$ is the indicator function for quadratic nonresidues.

\section{The First Circuit: Finding Quadratic Nonresidues of 17}

The easiest QNR test for a NISQ will be for the prime $17$. 
The goal of this section is to design the easiest challenge problem for a NISQ device.
We use algorithm~\ref{fermat_qnr} to construct figure~\ref{fig:full_circuit} which shows a circuit for the $p=17$ Fermat quadratic nonresidue algorithm.
The circuit $\sigma$ computes the indicator function $\ii$ on qubit $x_1$ and leaves the parity bit $x_0$ alone.

To run the Fermat version of the QNR quantum algorithm we need two subcircuits:
\begin{itemize}
  \item A circuit $\sigma$ that leaves the parity bit alone and computes the Jacobi symbol $\iii$ on the second qubit.
  \item A circuit that negates the phase of the zero state, $\ket{0000}\mapsto -\ket{0000}$.
\end{itemize}
% In the restricted case of Fermat primes, $[x<p]$ is not needed since exactly half of the values $0\le x<2^n$ are quadratic nonresidues.

\begin{figure*}[h!]
  \includegraphics[width=\textwidth]{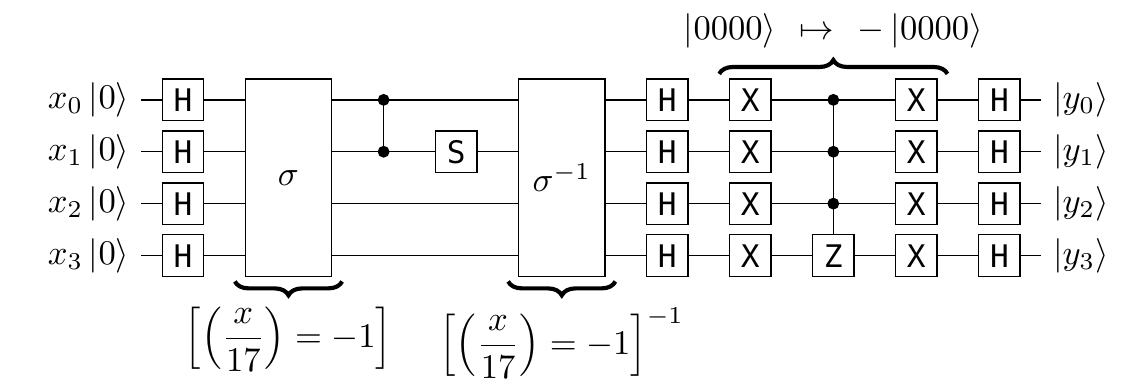}
     \caption{Full circuit for quadratic nonresidues mod $17$.} \label{fig:full_circuit}
\end{figure*}
% \begin{figure}[h!]
  % \includegraphics[width=\textwidth]{qnr17_graphic_circuit.pdf}
     % \caption{Full circuit for quadratic nonresidues mod $17$.} \label{fig:full_circuit}
% \end{figure}
In designing a test circuit, we should be aware of the limitations of current NISQs.
Honeywell, IBM, IonQ, and Rigetti currently provide public access to their quantum hardware.
We would like to design a circuit that runs on all of them.
% None of the current NISQs support a Toffoli or \CCZ gate natively (they decompose it into \CNOT gates and $T$ gates).
% Some compilers do not allow gates on three qubits at all.
% Additionally, some NISQs require all \CNOT to be nearest neighbor gates.
% We will design our circuit based on the following criteria:
The following circuit criteria performs well on all four platforms: 
\begin{itemize}
  \item Decompose all multi-qubit gates except for \CNOT and \CZ.
  \item \CNOT and \CZ gates must be nearest neighbor.
  \item Allow arbitrary phase rotations in circuit.
  \item Prefer minimizing scratch space to reducing gate count.
\end{itemize}

\subsection{Computing $\ii$}
\begin{figure*}[h]
   \centering{
     \includegraphics[width=\textwidth]{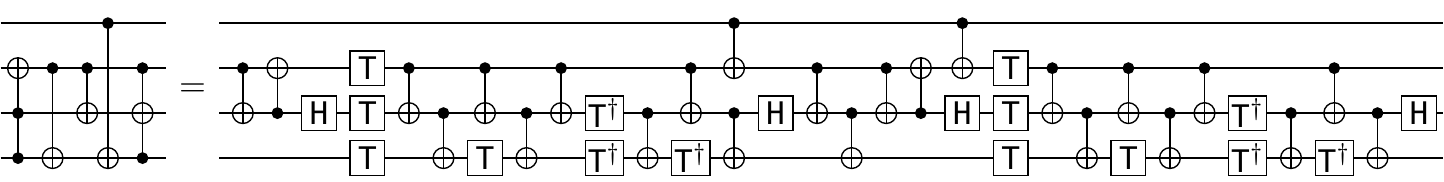}}
     \caption{Nearest neighbor $T$-gate decomposition of $\iii^{-1}$.}
  \label{fig:indicator_T_circuit}
\end{figure*}

Let
$\qq=\left\{x \,\middle|\, 0\le x <17\text{ and }\jacobi{x}{17}=-1\right\}$.
Note that $17\equiv 1\bmod{4}$ implies $-1$ is a quadratic residue of $17$, and thus $\hbox{max}(\qq)<16$.
Since $|\qq|=8$ and $\text{max}(\qq)<16$, the indicator function $[x\in \qq]=\iii$ restricted to four bits is a balanced function, illustrated in table~\ref{table:tt17}.
We can avoid using an extra bit of scratch space by computing $\iii$ in place as part of a permutation.
There are many $4$-bit permutations that will have $\iii$ as a coordinate function. Our goal is to find the simplest such permutation.

\begin{figure*}[h!]
   \centering{
     \includegraphics[width=\textwidth]{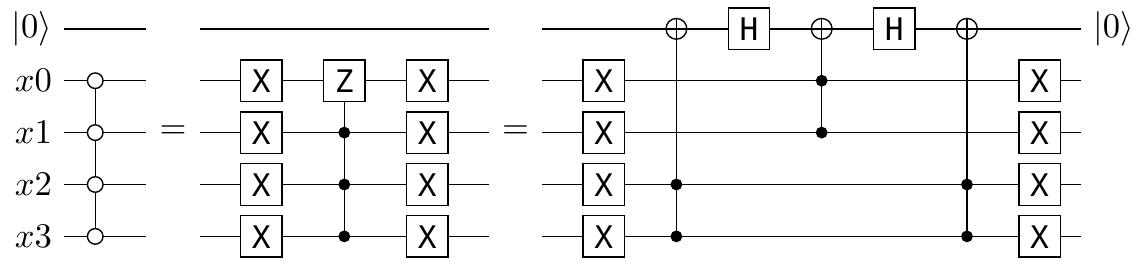}}
     \caption{Toffoli and \CCZ decomposition of 4-bit phase flip of $\ket{0000}$.}
  \label{fig:basiczeroflip}
\end{figure*}

\begin{table}[h!]
  \caption{Truth table of quadratic nonresidues mod $17$.}
  \label{table:tt17}

\[\begin{tabular}{cccc@{\hspace{1cm}}c@{\hspace{1cm}}c}
    \toprule
    $x_3$&$x_2$&$x_1$&$x_0$&$x$&$\ii$\\
    \midrule
    0&0&0&0&0&0\\
    0&0&0&1&1&0\\
    0&0&1&0&2&0\\
    0&0&1&1&3&1\\
    0&1&0&0&4&0\\
    0&1&0&1&5&1\\
    0&1&1&0&6&1\\
    0&1&1&1&7&1\\
    1&0&0&0&8&0\\
    1&0&0&1&9&0\\
    1&0&1&0&10&1\\
    1&0&1&1&11&1\\
    1&1&0&0&12&1\\
    1&1&0&1&13&0\\
    1&1&1&0&14&1\\
    1&1&1&1&15&0\\
    \bottomrule
\end{tabular}\]

\end{table}

As a bit polynomial,
$\iii=x_0x_1+x_0x_2+x_1x_2+x_1x_3+x_2x_3+x_0x_1x_3$.
Since the indicator function has a cubic term, it requires at least two Toffoli
gates. Figure~\ref{fig:indicator_circuit} shows a permutation
circuit\footnote{Circuit diagrams created with \qpic\cite{dk}} using only two
Toffoli gates.

\begin{figure}[h!]
  \centering
    {\includegraphics[width=2.5in]{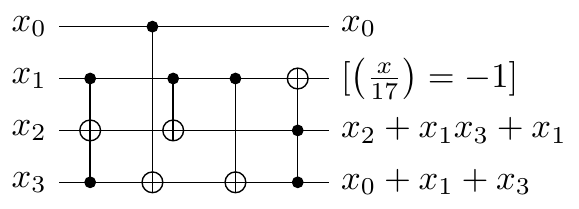}}
    \caption{An indicator permutation for $\ii$.}
    \label{fig:indicator_circuit}
\end{figure}

Note that the parity wire $x_0$ was left unchanged.
The phases of odd quadratic nonresidues can now be flipped by a controlled-$Z$ gate between the $x_0$ and $x_1$ wires.

Since we cannot have Toffoli gates in our final circuit, we search for a nearest neighbor decomposition of the circuit.
As we will see shortly, in this particular case, we only need the circuit for $\iii^{-1}$.
Figure~\ref{fig:indicator_T_circuit} was found using a computer search and the nearest neighbor decomposition of a Toffoli gate from \cite{3957}.  

\subsection{Negating the phase of the $\ket{0000}$ (a.k.a the Grover gate)}
% Placed here to force placement at the top of the next page.
\begin{figure*}[h!]
   \centering{
     \includegraphics[width=\textwidth]{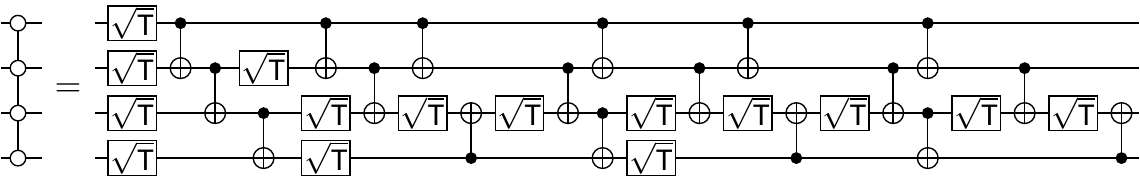}}
     \caption{Nearest neighbor $\sqrt{T}$-gate decomposition of 4-bit phase flip of $\ket{0}$.}
  \label{fig:zeroflip}
\end{figure*}

One major step of Grover's algorithm is changing the sign on the all zero state.
We will denote the action of $\ket{0}^{\otimes n} \mapsto -\ket{0}^{\otimes n}$ by a gate with all open circles and refer to is as the {\bf Grover gate}.
One possible way of negating the phase of $\ket{0000}$ is to negate the entire input and decompose the resulting \csword{CCCZ} gate.
Figure~\ref{fig:basiczeroflip} show a decomposition of a 4-qubit Grover gate into a single \csword{CCCZ} gate, which in turn is decomposed into a \CCZ gate, two Toffoli gates, and uses a scratch qubit.

% As will be shown, it is also possible to avoid using an extra qubit if $\sqrt{\xT}$ rotations are allowed.
Avoiding the use of extra qubits typically produces better result, and the computation of the Grover gate is the only part of the circuit that requires an additional qubit.
Thus, we seek to negate the zero state without using any scratch space.
Since all of the current NISQs allow for arbitrary phase rotation, we can do this using $\sqrt{\csword{T}}$ gates ($\pi/16$ rotations).
% Given all of theses restrictions, we seek an efficient nearest neighbor implementation that can take advantage of the $\sqrt{T}$ rotation gates available.

Recall the elementary symmetric functions
\[\sigma_j(x_1,\ldots,x_n) = \sum_{1\le i_1 < \cdots < i_j\le n}\prod_{k=1}^j x_{i_k}, \]
where $\sigma_0=1$ and $\sigma_j=0$ for $j<0$ or $j>n$.
The variables $\sigma_j$ is acting on should be clear from context.

% \begin{figure*}[h!]
   % \centering{
     % \includegraphics[width=\textwidth]{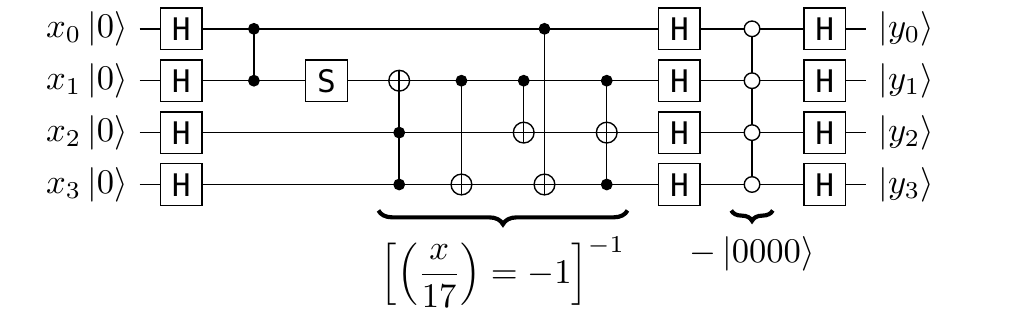}}
     % \caption{Reduced circuit for quadratic nonresidues mod $17$.}
  % \label{fig:circuit}
% \end{figure*}

% To flip the state of $\ket{0000}$ we will be concerned with the symmetric polynomials on $4$ variables:
% \begin{align*}
  % \sigma_1 &= a+b+c+d\\
  % \sigma_2 &= ab+ac+ad+bc+bd+cd\\
  % \sigma_3 &= abc+abd+acd+bcd\\
  % \sigma_4 &= abcd\\
% \end{align*}

In the decomposition of a \CCZ gate into \CNOT and \xT gates, $\pi/8$ rotations are applied to various linear combinations so that the accumulation of the rotations is zero except for $\ket{111}$.
We will use the same approach to negate the $\ket{0000}$ state using $\sqrt{\xT}$ gates.
Using the standard decomposition of $\oplus$ (\XOR) in the integers, we derive the following relationship between linear combinations and sums of symmetric polynomials.

\begin{align*}
  a\oplus b &= a+b-2ab=\sigma_1-2\sigma_2\\
  a\oplus b\oplus c &= \sigma_1-2\sigma_2+4\sigma_3\\
  a\oplus b\oplus c\oplus d &= \sigma_1-2\sigma_2+4\sigma_3-8\sigma_4\\
\end{align*}

When a phase rotation is applied to a linear combination, the result can be expressed as a root of unity raised to the polynomial associated with the linear combination.
Summing over each set of nonzero linear combinations by Hamming weight yields

\begin{align*}
  \sum_{i} x_i &= \sigma_1\\
  \sum_{i< j} x_i\oplus x_j &= 3\sigma_1 -2\sigma_2\\
  \sum_{i< j< k} x_i\oplus x_j\oplus x_k &= 3\sigma_1 -4\sigma_2+4\sigma_3\\
  x_1\oplus x_2\oplus x_3\oplus x_4 &= \sigma_1-2\sigma_2+4\sigma_3-8\sigma_4\\
\end{align*}

with a total sum of
\[ 8(\sigma_1-\sigma_2+\sigma_3-\sigma_4). \]

Consider the product of $x_i+1$ on four variables,
% \[ (x_1+1)(x_2+1)(x_3+1)(x_4+1)=1+\sigma_1+\sigma_2+\sigma_3+\sigma_4.\]
\[ \prod_{i=1}^4 \left(x_i+1\right)=1+\sigma_1+\sigma_2+\sigma_3+\sigma_4.\]

For boolean inputs, the result is odd if and only if all $x_i=0$.
Note that 
\[ 8(\sigma_1+\sigma_2+\sigma_3+\sigma_4)\equiv 8(\sigma_1-\sigma_2+\sigma_3-\sigma_4) \bmod{16}, \]
and the value is $8\pmod{16}$ for all nonzero boolean input and $0\pmod{16}$ for the all zero input. 

Thus, applying a phase rotation of $\pi/16$ to every nonzero linear combination results in a $-1$ phase being applied to all vectors except $\ket{0000}$, which is global phase equivalent to just negating the phase on $\ket{0000}$.

Figure~\ref{fig:zeroflip} shows a nearest neighbor decomposition that negates $\ket{0000}$.
An exhaustive search shows that 18 \CNOTs is minimal for nearest neighbor circuits and 14 \CNOTs for circuits without the nearest neighbor restriction.
% Computer search indicates the 14 \CNOTs is minimal without the nearest neighbor requirement, so the overhead is not too large in this case.
 % It was found by computer search, and is not guaranteed to be optimal.

% Placed here to force placement at the top of the next page.
% \begin{figure}[h!]
   % \centering{
     % \includegraphics[width=\textwidth]{zero_flip_p16.pdf}}
     % \caption{Nearest neighbor $\sqrt{T}$-gate decomposition of 4-bit Grover gate (phase flip of $\ket{0000})$.}
  % \label{fig:zeroflip}
% \end{figure}

% The circuit of figure~\ref{fig:full_circuit} generates quadratic nonresidues of $17$.

\subsection{Final circuit}
\begin{figure*}[h!]
   \centering{
     \includegraphics[width=\textwidth]{}}
     \caption{Reduced circuit for quadratic nonresidues mod $17$.}
  \label{fig:circuit}
\end{figure*}

One surprising optimization that can be done immediately is the removal of the forward path of the $\iii$ permutation.
After the initial Hadamard transform, all of the $x_i$ qubits are in the $\ket{+}$ state.
Any Toffoli or \CNOT gate that targets a $\ket{+}$ gate does nothing, since \NOT applied to $\ket{+}$ is still $\ket{+}$.
Thus we only have to uncompute the $\iii$ permutation. Figure~\ref{fig:circuit} shows the simplified circuit.

The circuit in figure~\ref{fig:circuit} can likely be simplified further, but the circuit should already be within the realm of something a very basic NISQ could execute.
This particular circuit will be used to compare NISQ devices from Honeywell, IBM, IonQ, and Rigetti.
% Additionally, how we simplify the circuit further may cause other problems.

\section{The \Ppoly Problem}
When writing small programs for a quantum computer, there is a tendency to spend a lot of time optimizing the circuit. Finding the simple circuit above for $\qq$ is an example of that. In general, this changes the complexity class for a polynomial time algorithm from \xP to \Ppoly. \Ppoly contains algorithms that run in polynomial time that have received polynomial-sized advice. Finding a quadratic nonresidue lies in randomized polynomial time, \RP, which is contained in \Ppoly~\cite{Adleman1978}.

\begin{algorithm}
  \caption{(Classical) Polynomial time algorithm for finding random QNRs with advice and a
    random source.}
  \label{rand_advice}
  \begin{noindlist}
  \item Let $a=f(p)$ be a quadratic nonresidue provided by advice function.
  \item Choose a random $r\in 1,2,\ldots ,p-1$.
  \item Return $ar^2\bmod p$.
  \end{noindlist}
  % \begin{algorithmic}[1]
  % \Procedure{FindQNR}{$p$}
  % \State Let $a=f(p)$ be a quadratic nonresidue provided by advice function.
  % \State Choose a random $r\in 1,2,\ldots ,p-1$.
  % \State \Return $ar^2\bmod p$.
  % \EndProcedure
  % \end{algorithmic}

\end{algorithm}

Algorithm~\ref{rand_advice} samples uniformly from the quadratic nonresidues modulo $p$.
$\Ppoly$ doesn't put a bound on how much work was done to create the function that provides the advice.
So to be fair, we should tread lightly when it comes to how much work we allow in the optimization of a particular quantum circuit if we are not going to allow similar preprocessing work on the classical side.

Ideally, a fair quantum QNR test should have the following properties:
\begin{enumerate}
  \item The initial circuit should be derived from a general algorithm like algorithm~\ref{quantum_qnr}.
  \item All circuit optimizations should be drawn from an established set of general optimizations.
     % (e.g., any \CNOT or Toffoli gate with $\ket{+}$ as the target can be removed).
  \item Any circuit translation to gates particular to a specific NISQ device should also be derived from an established set of general conversions for that NISQ.
  \item A classical computer should not be able to use any part of the quantum circuit to find quadratic nonresidues in polynomial time.
\end{enumerate}

For $p=17$, the exhaustive search to find a permutation that calculated $\iii$ as one of the output wires inadvertently encoded what it meant to be a quadratic nonresidue for $p=17$.
Consider figure~\ref{fig:reverse_circuit}, which is the indicator circuit (figure~\ref{fig:indicator_circuit}) reversed.
Any three bits $a,b,c$ with a $1$ on the second wire will produce a quadratic nonresidue. Thus, this circuit classically solves the problem for $p=17$.

\begin{figure}[h!]
  \centering
    {\includegraphics[width=2.5in]{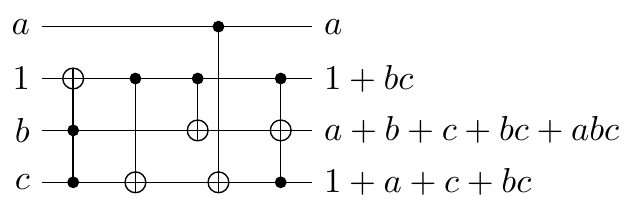}}
     \caption{Reversed circuit that generates quadratic nonresidues mod $17$.}
    \label{fig:reverse_circuit}
\end{figure}

% Out of place but forcing placement
\begin{figure*}[h]
   \centering{
     \includesvg[width=\textwidth]{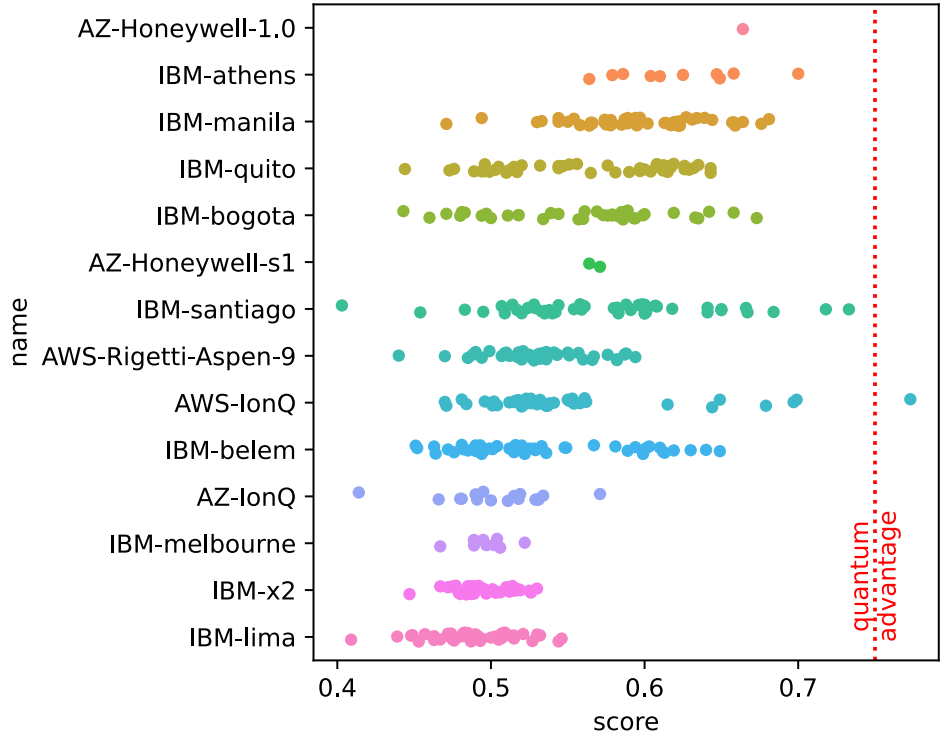}
     \caption{Success rate of 1000 shot runs on different devices, sorted by median. (June 19-Aug 31, 2021)}
   }
  \label{fig:nisq_score}
\end{figure*}

% TODO: This is here  to encourage placement
\begin{figure*}[h!]
   \centering{
     \includesvg[width=\textwidth]{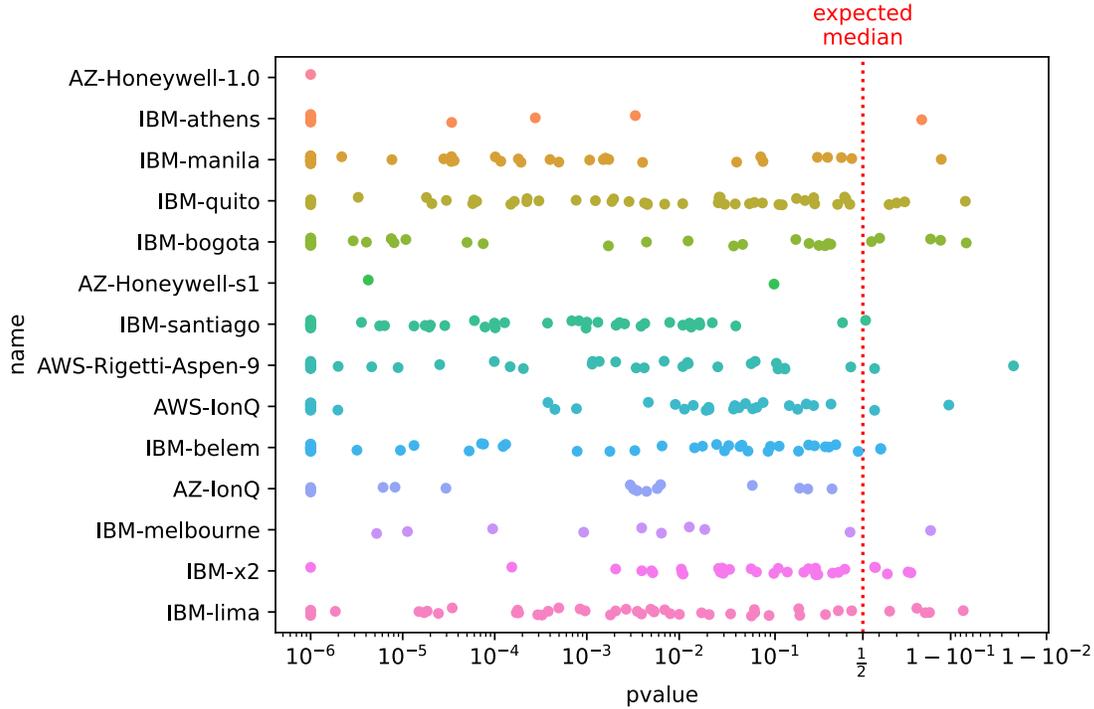}
     \caption{$p$-value of 1000 shot runs on different devices, sorted by median. (June 19-Aug 31, 2021)}
   }
  \label{fig:pvalue}
\end{figure*}

% TODO: This is here  to encourage placement
\begin{figure*}[h!]
   \centering{
     \includesvg[width=\textwidth]{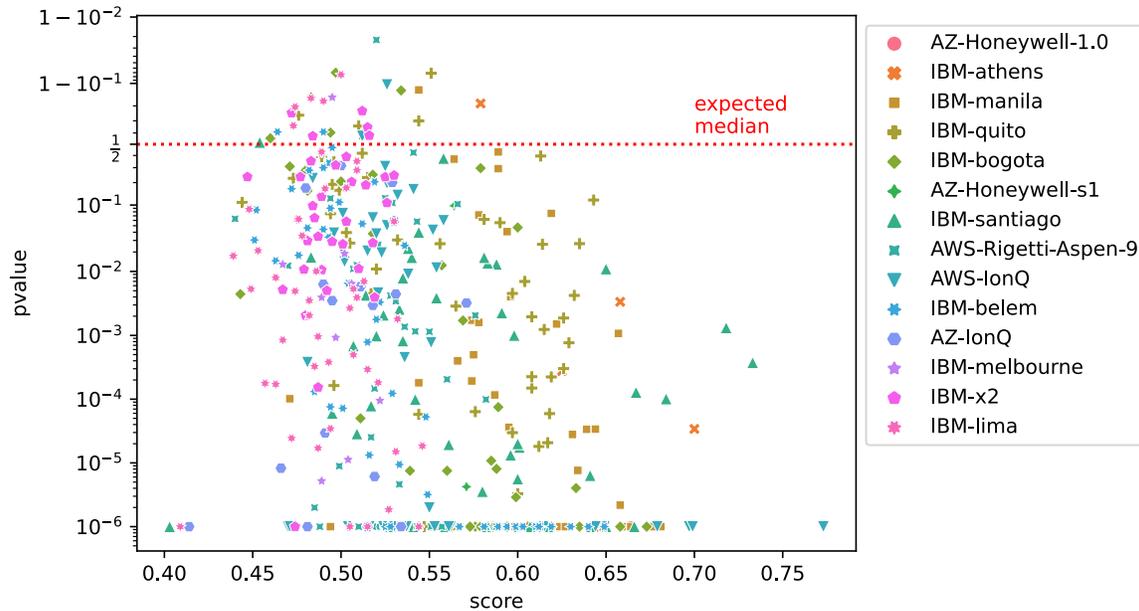}
     \caption{Scatter plot success rate and $p$-value per run of 1000 shots. Legend sorted by median or scores. (June 19-Aug 31, 2021)}
   }
  \label{fig:scatterplot}
\end{figure*}
For this reason, the circuit of figure~\ref{fig:circuit} fails property 4 of our fairness requirements.
In fact, any 4-bit permutation that calculates $\iii$ on an output wire, can be used to generate quadratic nonresidues for $p=17$ using exactly the same trick.

This trick does not solve the problem of finding quadratic nonresidues in general.
We expect algorithms that calculate the Jacobi symbol for arbitrary $p\equiv 1\bmod 8$ to produce circuits that require scratch space.
Those circuits cannot be trivially reversed since we don't know what the value of the internal scratch space will be when the Jacobi symbol is calculated.

\section{Scoring Current NISQ Devices}
Acknowledging the unfairness of the circuit in figure~\ref{fig:circuit}, we nevertheless are interested in how well current NISQs perform this new test.
A fair circuit for $p=17$ will likely require more gates and qubits.
Thus, while not ideal, the unfairly optimized circuit will provide a useful lower bound to get things started.
As we will we see, even this very simple circuit is a real challenge for current NISQs.
% Figure~\ref{fig:nisq_score} shows the success rate of $1000$ shot runs for various NISQs during experimentation in the summer of 2021.
The data was collected between June 19, 2021 and August 31, 2021.
Table~\ref{table:runs} shows the number of successful collection days for each device.
Not all devices had equal availability or ease of use to run.

\begin{table}[h!]
  \caption{Number of runs completed per device}
  \label{table:runs}

  \[\begin{tabular}{lc}
    \toprule
    NISQ & Completed runs\\
    \midrule
IBM-lima & 47 \\
IBM-quito & 49 \\
IBM-manila & 49 \\
IBM-belem & 49 \\
IBM-bogota & 38 \\
IBM-athens & 10 \\
IBM-santiago & 49 \\
IBM-melbourne & 10 \\
IBM-x2 & 33 \\
AZ-IonQ & 17 \\
AZ-Honeywell-s1 & 2 \\
AZ-Honeywell-1.0 & 1 \\
AWS-IonQ & 43 \\
AWS-Rigetti-Aspen-9 & 43 \\
    \bottomrule
\end{tabular}\]
\end{table}

Figure 12 shows the success rate of $1000$ shot runs for various NISQs during experimentation in the summer of 2021.
Architectures such as IonQ could benefit from slightly smaller circuits since they do not have a nearest neighbor requirement, but for this initial comparison, all NISQs were purposely given the same circuit.
% Individual device optimizations will certainly be important, but the focus here was to run the same circuit.
% Also, the majority of the available NISQ devices at IBM only have 5 qubits, so a larger algorithm could not be as widely tested.
% Success rates tended to be consistent by the hour, but could vary a lot from day to day.
% Additional time series analysis is being done at this time.

Each run consists of $1000$ shots, and the score is what percentage of those observations were quadratic nonresidues.
The {\bf quantum advantage} line indicates $75\%$ correct, beyond which a quantum computer starts providing evidence of something we don't know how to do classically.
The NISQs were tested via three different platforms, IBM, Microsoft Azure (AZ), and Amazon (AWS).

% The author had a series of problems trying to run on the Honeywell machines.
% Only one complete run was done on the 1.0 system before it was retired.
% Only two runs were completed on the more recent s1 machine.

Since the success rate of a run seemed to be fairly consistent on a given day, at most one run per day per device was done.
% The daily availability of each NISQ varied substantially by platform.
When more longitudinal data is available, it will be interesting to see if the scores degrade over time, or if they stay relatively constant.

Surprisingly, some of the NISQ devices were worse than random, even for multiple runs.
Since $\ket{0000}$ and $\ket{1111}$ are both quadratic residues, if a NISQ failed in a biased way towards $\ket{0}$ or $\ket{1}$, it could easily be causally worse than random.

\subsection{Honeywell tests}
Unfortunately, Honeywell was in the midst of retiring their "1.0" system and bringing up their new "s1" system during the testing period.
Frustratingly, the strong single score for the "1.0" system begs for more data.
The new "s1" system seems to not perform as well as the "1.0" system, but the data here is lacking to make any conclusive judgment.

\section{Succeeding Uniformly}

It is not enough for a NISQ to produce quadratic nonresidues at a high success rate, but it should do so uniformly.
If a NISQ has a particular failure bias, and that bias favors a particular subset of QNRs, this should be considered lucky and not good.
For each run of $1000$ shots, we calculate the $p$-value based on the Chi Square test.
Given a $1000$ shot run, the number of QNR observations will be $k=1000*score$.
Let $O_i$ be the number of observations of the $i$th QNR.
Since their are $8$ QNRs modulo $17$, the expected number of each observed QNR is $E(O_i)=E_i=k/8$.
We then calculate the chi-square statistic for our actual observations.
\[ \chi^2 = \sum_{i=1}^8 \frac{(O_i-k/8)^2}{k/8} \]
Since we have $8$ QNR observations, this will be a chi-square statistic with $7$ degrees of freedom.
Following standard practice, we compute a $p$-value of our observation, which is the probability that we would see our observation or something more extreme given the assumption that we are sampling from a uniform distribution on $8$ elements.

With the uniformity assumption, our $p$-values will be uniformly distributed from $0$ to $1$.
As we see in figure 13, our samples are very rough, indicating a strong lack of uniformity.
Since the area we are interested in is concentrated around $\frac{1}{2}$, $p$-values less that $10^{-6}$ are simply truncated to that level.

A NISQ device running the QNR17 test correctly should not favor any particular QNR or subset of QNRs over the others.
From this point of view, the single promising result from the Honeywell "1.0" system looks suspect.
Since the $p$-value for uniformity is less than $1$ in a million, it seems that the system more likely happened to be failing into a QNR state than running the algorithm with high fidelity.

For the higher scoring runs, we would like to see increased evidence of uniformity.
Instead we see the opposite.
Figure 14 shows a significant decrease in the likelihood of uniformity for high scoring runs on the given NISQs.
Up to scores of $.60$, there are still $p$-values close to $\frac{1}{2}$.
After that point, the likelihood of any run being uniform is very low.

Additionally, it should be noted that nearly half of the runs had a $p$-value of less than $10^{-6}$.
This indicates that however the devices were failing, they were failing in a biased way with respect to the QNRs.

\section{Conclusion}
We look forward to additional NISQ devices running the algorithm from figure~\ref{fig:circuit} for $p=17$.
Currently, most results are barely above the noise threshold, and are not yet exceeding 75\% regularly.
Of the runs that do score well, most have a very low $p$-value, so they might be succeeding for a causally bad reason.
QASM code is included in the appendix for trying out the QNR17 test.

As NISQs improve, algorithms for $p=41$, $F_4=257$, and $F_5=65537$ will likely be of interest.
For initial evaluations, it is reasonable to use small, unfairly optimized circuits, but we expect to see fair circuits derived from the full algorithm and reduced by a rules-based optimizer as NISQ devices become more capable.

\section{Acknowledgments}
Thank you to Dmitri Maslov, John Preskill, Anthony Gamst, and Thomas L. Draper who all provided helpful feedback on early drafts of this paper. 
% references section

% can use a bibliography generated by BibTeX as a .bbl file
% BibTeX documentation can be easily obtained at:
% http://mirror.ctan.org/biblio/bibtex/contrib/doc/
% The IEEEtran BibTeX style support page is at:
% http://www.michaelshell.org/tex/ieeetran/bibtex/
\bibliographystyle{IEEEtran}  % TGD
% argument is your BibTeX string definitions and bibliography database(s)
%\bibliography{IEEEabrv,../bib/paper}
\bibliography{IEEEabrv,QNR_NISQ}  % TGD

\newpage
\appendix
\subsection*{QASM source code}
\tt
\lstinputlisting{qnr17_nn.qasm}

\end{document}